\pdfoutput=1
\newif\ifSC
\SCfalse
\ifSC
\documentclass[12pt, draftclsnofoot, onecolumn]{IEEEtran}
\else
\documentclass[letterpaper,twocolumn,twoside]{IEEEtran}
\fi

\usepackage{comment}

\usepackage{amsmath}
\usepackage{mathtools}
\usepackage{amsmath}
\usepackage{amssymb}
\usepackage{cite}
\usepackage{array}
\usepackage{amsthm}
\usepackage{amsmath,dsfont}
\usepackage{tabulary}
\usepackage{multirow}
\usepackage{cleveref}
\usepackage{subfig}
\usepackage[font=small]{caption}
\usepackage{color}
\usepackage{pifont}
\usepackage{epsfig}
\usepackage{graphicx}
\usepackage{url}
\usepackage{amssymb}
\IEEEoverridecommandlockouts











\newtheorem{thm}{{\bf Theorem}}
\newtheorem{cor}{Corollary}

\theoremstyle{definition}
\newtheorem{definition}{Definition}

\begin{document}

\title{Correction Factor for Analysis of MIMO Wireless Networks With Highly Directional Beamforming}
\author{Mandar N. Kulkarni,  Eugene Visotsky, Jeffrey G. Andrews
\thanks{Email:
\ifSC
$\mathtt{\{mandar.kulkarni@,jandrews@ece.\}utexas.edu,eugene.visotsky@nokia-bell-labs.com}.$
\else $\mathtt{\{mandar.kulkarni@,jandrews@ece.\}utexas.edu,eugene.}$ $\mathtt{visotsky@nokia-bell-labs.com}.$
\fi M. Kulkarni and J. Andrews are with the UT Austin, TX. E. Visotsky is with Nokia Bell Labs, IL. Last revised on \today.}}
\maketitle

\begin{abstract}
In this letter, we reconsider a popular simplified received signal power model with single stream beamforming employed by the transmitter and the receiver in the regime when the beams have high gain and narrow beamwidth.  We define {\em the correction factor} as the ratio of the average actual received signal power divided by the average received signal power using the popular simplified model. We analytically quantify this factor for LOS and NLOS service and interfering links under some assumptions. The analysis along with simulations using a 3GPP compliant new radio (NR) channel model confirm the importance of incorporating the correction factor in coverage analysis of wireless networks that utilize the popular simplified received power model. 
\end{abstract}
\section{Introducing The Correction Factor}
In system level analysis for computing coverage and rate performance of wireless networks on $\mathbb{R}^2$ a popular model to compute the received signal power at $X\in\mathbb{R}^2$ from a transmitter (serving/interfering) at $Y\in\mathbb{R}^2$ is as follows  \cite{Yu03,Hun08,BaiHea14,Wildman14}.
\begin{equation}
\label{eq:singlepath}
P_r = P_t \ell(||X-Y||) h G_t(\theta) G_r(\phi),
\end{equation}
where $P_t$ is the transmit power, $\ell(.)$ is the path loss, $h$ is the small scale fading, $G_t(\theta)$ is the transmit antenna gain and $G_r(\phi)$ is the receive antenna gain. If at all blockage effects are explicitly incorporated in the analysis by differentiating line of sight (LOS) and non-LOS (NLOS) links, then only $h$ and $\ell$ are modeled differently for LOS and NLOS\cite{BaiHea14}. The antenna patterns $G_t(.)$ and $G_r(.)$ are considered to have the same distribution for LOS/NLOS links. In this work, we will show the importance of incorporating an additional blockage dependent factor in the received signal power when the antenna patterns have very narrow beamwidths and large gains -- for example, an antenna pattern having $36$ dB gain and $12^o$ half power beamwidth in azimuth. Our analytical model shows that if there are large number of antennas at the transmitter and receiver, which employ analog beamforming, then the additional factor (called as the correction factor) is much less than 1 for NLOS service links but is close to 1 for LOS service links, and equal to 1 for NLOS/LOS interfering links. Such a factor cannot be incorporated by modifying either $h$ or $\ell(.)$ for analyzing signal to interference plus noise ratio (SINR) in highly directional MIMO wireless networks, especially cellular networks, and an example to explain this is given in the Appendix. 

Most prior analyses of MIMO wireless networks computing coverage and rate performance with highly directional single user beamforming incorporates a received signal power similar to \eqref{eq:singlepath} and do not model a channel with LOS/NLOS dependent rank \cite{BaiHea14,Li17,Maamari16,Park16,Deng17,Kulkarni17,Wildman14,Hun08}\footnote{Except our prior work in \cite{KulGhoAnd16} which was the first attempt to do so, to the best of our knowledge. }, which gives rise to the needed correction factor as we will show in this work. The purpose of this letter is to make the growing research community using received power model similar to \eqref{eq:singlepath} aware of the significance of how different rank of the MIMO channel for LOS and NLOS can affect the effective antenna gain and thus the design insights. We will formally define effective antenna gain in this work. Also we propose a quick way to preserve the existing analyses by multiplication of a LOS/NLOS dependent constant for service links but not the interfering links. The example in Appendix is indicative of how this can be done. The correction factor is especially important for analysis of millimeter wave (mmWave) cellular networks, wherein inclusion of blockage effects is crucial and the beamforming is highly directional\cite{BaiHea14}. All prior works which studied different system design issues in these networks like  \cite{Li17,Maamari16,Park16,Deng17,Kulkarni17} use the received power model in \eqref{eq:singlepath} without incorporating the correction factor. In Section V, we discuss key implications on system design resulting from incorporation of such a factor. 

The analysis in this work is for analog beamforming implementation done at the transmitter and the receiver under consideration.  Our analysis along with the simulation results considering a more detailed wideband 3GPP channel model suffice to motivate the inaccuracy of the popular model in \eqref{eq:singlepath} when the transmit and receive beams are narrow and with large gains. However, more detailed analysis is needed in the future to estimate the correction factor more accurately.  

\section{System Model}
\label{sec:sysmodel}
We concentrate on a single transmitter-receiver pair in a wireless network.  $N_{t}$ and $N_r$ denote the number of transmit and receive antennas. If the link is NLOS, the narrowband channel between the transmitter-receiver is given by \cite{Akd14,3gpp_tr38901}
\begin{equation}
\label{eq:multipathchannel}
{\bf H}_\mathrm{NLOS} =  \kappa \sqrt{\frac{\ell(d)}{\eta}}\sum_{i=1}^{\eta}\gamma_i {\bf a}_{r}(\phi_i){\bf a}^*_{t}(\theta_i),
\end{equation}
where $\ell(d)$ is the path gain (assumed deterministic function of $d$ for simplicity),  
$\eta$ is the number of paths (assumed constant),  $d$ is the transmission distance in meters and $\gamma_i$ is the small scale fading on path $i$ (random variable such that $\mathbb{E}\left[|\gamma_i|^2\right] = 1$) and $\kappa$ is a normalizing constant such that $\mathbb{E}\left[||{\bf H}_\mathrm{NLOS}||^2_F\right] = N_t N_r \ell(d)$.  

Assuming a uniform linear array at the receiver, the array response vector ${\bf a}_{r}$ is given as ${\bf a}_{r}(\phi_i) = \left[1\; e^{-j \phi_i} \;e^{-2j\phi_i} \;\ldots \;e^{- (N_{r}-1)\phi_i}\right]^{T}$, where $j$ is square root of $-1$. 
Similarly, one can define ${\bf a}_t$ by replacing $N_r$ with $N_t$. Note that $\phi_i$ and $\theta_i$ are spatial angles of arrival and departure (AOA/AOD). It is assumed that these AOAs and AODs are continuous random variables and no assumption on their distribution is made. 

If the link is LOS, the narrowband channel is given by\cite{3gpp_tr38901} 
\ifSC
\begin{equation}
\label{eq:LOSmultipathchannel}
{\bf H}_\mathrm{LOS} =  \sqrt{\ell(d)} \left(\sqrt{ \frac{\mathrm{K}_\mathrm{R}}{\mathrm{K}_\mathrm{R}+1}} {\bf a}_{r}(\phi_0){\bf a}^*_{t}(\theta_0) +\right.
\left.\kappa \sqrt{\frac{1}{\eta(\mathrm{K}_\mathrm{R}+1)}}\sum_{i=1}^{\eta}\gamma_i {\bf a}_{r}(\phi_i){\bf a}^*_{t}(\theta_i)\right),
\end{equation}
\else
\begin{multline}
\label{eq:LOSmultipathchannel}
{\bf H}_\mathrm{LOS} =  \sqrt{\ell(d)} \left(\sqrt{ \frac{\mathrm{K}_\mathrm{R}}{\mathrm{K}_\mathrm{R}+1}} {\bf a}_{r}(\phi_0){\bf a}^*_{t}(\theta_0) +\right.\\
\left.\kappa \sqrt{\frac{1}{\eta(\mathrm{K}_\mathrm{R}+1)}}\sum_{i=1}^{\eta}\gamma_i {\bf a}_{r}(\phi_i){\bf a}^*_{t}(\theta_i)\right),
\end{multline}
\fi
where $\mathrm{K}_\mathrm{R}$ is the Rician K-factor. AOA and AOD given by $\phi_0$ and $\theta_0$ are constants corresponding to the direct LOS path between the receiver and the transmitter. Rest of the AOA/AOD are continuous random variables. $\eta$ and $\kappa$ could have different LOS-specific values here, as compared to \eqref{eq:multipathchannel}. 

Assuming ${\bf w}$ is the combiner employed by the receiver and ${\bf f}$ is the precoder employed by the transmitter, the received signal power model is given as 
$
P^{\mathrm{multi}}_r = ||{\bf w}^* {\bf H} {\bf f}||^2,
$
where ${\bf H}$ is the $N_r\times N_t$ channel which could be either ${\bf H}_\mathrm{LOS}$ or ${\bf H}_\mathrm{NLOS}$. We constrain  ${\bf w}$ and ${\bf f}$ to be chosen of the form $\frac{1}{N_r}{\bf a}_r(.)$ and $\frac{1}{N_t}{\bf a}_t(.)$, respectively, which is basically employing analog beamforming using phase shifters at both the receiver and the transmitter. If the transmitter-receiver pair form a desired communication link, ${\bf w}$ and ${\bf f}$ are chosen so as to maximize $P^{\mathrm{multi}}_r $. If the transmitter-receiver pair form an interfering link, then ${\bf w}$ and ${\bf f}$ can be arbitrary. 

Most analytical studies to compute coverage and rate performance cannot afford to use the received signal power model defined above for tractability. As mentioned in Section I, a simplified model similar to \eqref{eq:singlepath} is generally used. Now we will define a generative model for such a simplified model. We define a keyhole channel as follows\cite{Chizhik00}. ${\bf H}_\mathrm{keyhole} = \sqrt{\ell(d)}\gamma {\bf a}_r(\phi){\bf a}_t(\theta)$, where $\mathbb{E}\left[|\gamma|^2\right] = 1$ and $\{\theta,\phi\}$ could have arbitrary distribution. Now, $P^{\mathrm{keyhole}}_r$ is defined as $||{\bf w}^* {\bf H}_\mathrm{keyhole} {\bf f}||^2$. If the transmitter-receiver pair is a desired signal link, ${\bf w} =\frac{1}{N_r} {\bf a}_r(\phi)$ and ${\bf f} =\frac{1}{N_t} {\bf a}_t(\theta)$ to maximize $P^{\mathrm{keyhole}}_r$ and thus, $P^{\mathrm{keyhole}}_r = |\gamma|^2 \ell(d) N_t N_r$. If the transmitter-receiver pair is an interfering link with ${\bf w} = \frac{1}{N_r}{\bf a}_r(\phi')$ and ${\bf f} =\frac{1}{N_t} {\bf a}_t(\theta')$ for some arbitrary $\phi'$ and $\theta'$, then $P^{\mathrm{keyhole}}_r = \ell(d)|\gamma|^2 G_r(\phi,\phi') G_t(\theta',\theta)$, where  $G_r(\phi,\phi') = ||\frac{1}{\sqrt{N_r}}{\bf a}^*_r(\phi){\bf a}_r(\phi')||^2$. Similarly $G_t$ can be written replacing subscript $r$ with $t$ and $\phi$ with $\theta$.  Unlike the desired signal power case, here $\phi'$ and $\theta'$ are not chosen to maximize $||{\bf w}^*{\bf H}_\mathrm{keyhole}{\bf f}||^2$ but can be random angles distributed according to some continuous distribution.

We wish to compare $\mathbb{E}\left[P^{\mathrm{multi}}_r\right]$ with  $\mathbb{E}\left[P^{\mathrm{keyhole}}_r\right]$. This comparison will highlight how important it is to consider rank $>1$ channels for LOS and NLOS in terms mean received signal power since the keyhole channel is always rank 1. In order to quantify this comparison, we define a correction factor as follows. 

\begin{definition}
The proposed {\em correction factor} to estimate the received signal power on a serving/interfering link is defined as $\Upsilon = \mathbb{E}\left[P^{\mathrm{multi}}_r\right]/\mathbb{E}\left[P^{\mathrm{keyhole}}_r\right]$. 
\end{definition}

Note that for serving links $\mathbb{E}\left[P^{\mathrm{keyhole}}_r\right] = N_t N_r \ell(d)$ irrespective of LOS/NLOS as per our analytical model. 

\begin{definition}
The {\em effective antenna gain} is defined as the actual received signal power (on serving/interfering links) normalized by the path loss and the transmit power of the signal. 
\end{definition}

Note that the effective antenna gain is in general a random variable. As per our analytical model, it is equal to $P^{\mathrm{multi}}_r/\ell(d)$. Considering our system model, wherein $\ell(d)$ is deterministic the mean effective antenna gain for a serving link is given by $\Upsilon N_t N_r$, where $\Upsilon$ is the correction factor for a serving link. Our proposal is that if one wants to use a simplified received power model like in \eqref{eq:singlepath} for system level analysis, wherein the impact of beamforming is captured through a spatial gain pattern at the transmitter and receiver, then the corrected received signal power on serving and interfering links is obtained by multiplying $\Upsilon$ to the estimate in \eqref{eq:singlepath}. Since here a keyhole model is used to generate the simplified received power model in \eqref{eq:singlepath}, the corrected received signal power is $\Upsilon P^{\mathrm{keyhole}}_r$. 

\section{Computing $\Upsilon$ When $N_t, N_r$ Are Large}
\label{sec:sysmodel2}
Before we state the results, we make a quick observation based on the result in \cite{Aya12}.

{\bf Observation 1:} As $N_r \to \infty$ and $N_t \to \infty$, the left singular vectors  corresponding to non-zero singular values of \eqref{eq:multipathchannel} and \eqref{eq:LOSmultipathchannel} converge to $\frac{1}{\sqrt{N_r}}{\bf a}_r(\phi_i)$, with $i=1\ldots, \eta$ for \eqref{eq:multipathchannel} and $i=0,\ldots,\eta$ for \eqref{eq:LOSmultipathchannel}. Similarly, the right singular vectors corresponding to non-zero singular values of \eqref{eq:multipathchannel} and \eqref{eq:LOSmultipathchannel} converge to $\frac{1}{\sqrt{N_t}}{\bf a}_t(\phi_i)$. 

{\bf Observation 2:} As $N_r \to \infty$, ${\bf a}^*_r(\phi_i){\bf a}_r(\phi_j)/N_r \to \mathds{1}(i=j)$. Similarly ${\bf a}^*_t(\theta_i){\bf a}_t(\theta_j)/N_t \to \mathds{1}(i=j)$ as $N_t \to \infty$.

\begin{thm}
Large $N_t$ and $N_r$ is assumed. If the link is a NLOS service link, then
\ifSC 
$\mathbb{E}\left[P^{\mathrm{multi}}_r\right] \approx N_t N_r \ell(d)\times $ $\mathbb{E}\left[\max_{i=1,\ldots,\eta} |\gamma_i|^2\right]/\eta.$
\else
$\mathbb{E}\left[P^{\mathrm{multi}}_r\right] \approx N_t N_r \ell(d) \mathbb{E}\left[\max_{i=1,\ldots,\eta} |\gamma_i|^2\right]/\eta.$
\fi
If the link is a LOS service link and $\mathrm{K}_\mathrm{R}\gg 1$ then 
$\mathbb{E}\left[P^{\mathrm{multi}}_r\right] \approx  N_t N_r \ell(d) \mathrm{K}_\mathrm{R}/(\mathrm{K}_\mathrm{R}+1).$
\end{thm}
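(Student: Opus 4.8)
The plan is to recast the constrained beamforming as a singular value problem and then exploit the asymptotic decoupling of the array response vectors. For a serving link, $P^{\mathrm{multi}}_r$ is the maximum of $\|{\bf w}^*{\bf H}{\bf f}\|^2$ over admissible ${\bf w}\propto{\bf a}_r(\cdot)$, ${\bf f}\propto{\bf a}_t(\cdot)$. Since the unconstrained maximum of $\|{\bf w}^*{\bf H}{\bf f}\|$ over unit-norm vectors is the leading singular value $\sigma_{\max}({\bf H})$, attained at the top singular-vector pair, and since Observation~1 states that the dominant left and right singular vectors of both ${\bf H}_\mathrm{NLOS}$ and ${\bf H}_\mathrm{LOS}$ are asymptotically of exactly the admissible form $\tfrac{1}{\sqrt{N_r}}{\bf a}_r(\phi_i)$, $\tfrac{1}{\sqrt{N_t}}{\bf a}_t(\theta_i)$, the admissibility constraint becomes non-binding as $N_t,N_r\to\infty$: the constrained optimum converges to $\sigma^2_{\max}({\bf H})$, realised by pointing ${\bf w},{\bf f}$ along the path pair $(\phi_i,\theta_i)$ carrying the largest singular value. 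So step one is to identify those singular values.

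Step two uses Observation~2 to read them off. Writing ${\bf H}_\mathrm{NLOS}$ as a sum of $\eta$ rank-one terms, the normalised left factors ${\bf a}_r(\phi_i)/\sqrt{N_r}$ (respectively the right factors) are asymptotically orthonormal across $i$ because the AOAs/AODs are a.s.\ distinct; hence the $i$-th singular value converges to $\kappa\sqrt{\ell(d)/\eta}\,|\gamma_i|\sqrt{N_tN_r}$. Substituting the admissible beams aligned with $k=\arg\max_i|\gamma_i|$ into $\|{\bf w}^*{\bf H}_\mathrm{NLOS}{\bf f}\|^2$, and again invoking Observation~2 to discard the $i\neq k$ cross-terms, gives $P^{\mathrm{multi}}_r\approx \kappa^2\,\tfrac{\ell(d)}{\eta}\,N_tN_r\,\max_{i=1,\ldots,\eta}|\gamma_i|^2$. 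Step three fixes $\kappa$ from $\mathbb{E}\big[\|{\bf H}_\mathrm{NLOS}\|_F^2\big]=N_tN_r\ell(d)$: expanding the Frobenius norm and using ${\bf a}_r^*(\phi_i){\bf a}_r(\phi_i)=N_r$, ${\bf a}_t^*(\theta_i){\bf a}_t(\theta_i)=N_t$, $\mathbb{E}|\gamma_i|^2=1$ (the cross-terms dropping out by zero-mean fading, or again by Observation~2) yields $\kappa^2\ell(d)N_tN_r=N_tN_r\ell(d)$, i.e.\ $\kappa=1$. Averaging over the fading then produces the NLOS formula.

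For the LOS link the same three steps apply. When $\mathrm{K}_\mathrm{R}\gg1$ the LOS rank-one component $\sqrt{\mathrm{K}_\mathrm{R}/(\mathrm{K}_\mathrm{R}+1)}\,{\bf a}_r(\phi_0){\bf a}_t^*(\theta_0)$ has, with probability approaching one, strictly the largest singular value among the $\eta+1$ components --- this is exactly where the hypothesis $\mathrm{K}_\mathrm{R}\gg1$ enters, since it forces $\sqrt{\mathrm{K}_\mathrm{R}}>\kappa\,\max_i|\gamma_i|/\sqrt{\eta}$ with high probability --- so the optimal admissible beams point along $(\phi_0,\theta_0)$. Plugging them in, the scattered-path cross-terms vanish in the large-array limit by Observation~2 and (with $\kappa$ again equal to $1$ by the Frobenius normalisation, though its value is now irrelevant) $P^{\mathrm{multi}}_r\approx \tfrac{\mathrm{K}_\mathrm{R}}{\mathrm{K}_\mathrm{R}+1}N_tN_r\ell(d)$, which is deterministic, so the expectation is immediate.

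The main obstacle is making the chain of limits rigorous: that the constrained maximum genuinely converges to $\sigma^2_{\max}({\bf H})$ (Observation~1 says the constraint is asymptotically inactive, but the rate of convergence, and the behaviour when the two largest singular values are nearly tied, need care), and that the $N_t,N_r\to\infty$ limit may be interchanged with the expectation over $\{\gamma_i\}$ --- and, in the LOS case, with the high-probability event on which the LOS path dominates. In keeping with the paper's ``$\approx$'', I would handle these heuristically: compute the almost-sure large-array limit of $P^{\mathrm{multi}}_r$ as an explicit function of the $\gamma_i$ alone, then average. A minor additional check is that the beam ``aligned'' with the dominant path still picks up a residual contribution from every other path; Observation~2 bounds this by $o(\sqrt{N_tN_r})$ relative to the main term, so it does not perturb the leading-order mean power.
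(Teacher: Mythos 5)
Your proposal is correct and follows essentially the same route as the paper's own proof: identify the optimal analog beams with the dominant path via Observation~1, use the asymptotic orthogonality of Observation~2 to kill cross-terms and to fix $\kappa=1$ from the Frobenius normalization, and for LOS argue (the paper does this with Markov's inequality) that $\mathrm{K}_\mathrm{R}\gg1$ makes the direct path dominant with high probability. The only differences are cosmetic (normalizing $\kappa$ by expanding $\|{\bf H}\|_F^2$ directly rather than via the squared singular values, and your more explicit acknowledgement of the limit-interchange issues the paper also treats heuristically).
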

\begin{proof}
Optimal combiner and precoder correspond to the singular vectors corresponding to the maximum singular value norm of the channel matrix. Making use of Observation 1 for NLOS channel with large number of antennas, ${\bf w} = \frac{1}{\sqrt{N_r}}{\bf a}_{r}(\phi_1)$ and ${\bf f} = \frac{1}{\sqrt{N_t}}{\bf a}_t(\theta_1)$ assuming $|\gamma_1| = \max_i |\gamma_i|$, without loss of generality. Thus, 
\ifSC
\begin{align*}
P^{multi}_r &= ||{\bf w}^* {\bf H}_{\mathrm{NLOS}} {\bf f}||^2 =  \left|\left|\kappa \sqrt{\frac{\ell(d)}{\eta N_t N_r}}\sum_{i=1}^{\eta}\gamma_i{\bf a}^{*}_r(\phi_1) {\bf a}_r(\phi_i){\bf a}^{*}_t(\theta_i){\bf a}_t(\phi_1)\right|\right|^2.
\end{align*}
\else
\begin{align*}
P^{multi}_r &= ||{\bf w}^* {\bf H}_{\mathrm{NLOS}} {\bf f}||^2\\& =  \left|\left|\kappa \sqrt{\frac{\ell(d)}{\eta N_t N_r}}\sum_{i=1}^{\eta}\gamma_i{\bf a}^{*}_r(\phi_1) {\bf a}_r(\phi_i){\bf a}^{*}_t(\theta_i){\bf a}_t(\phi_1)\right|\right|^2.
\end{align*}
\fi
Note that Observation 1 implies that the non-zero singular values of \eqref{eq:multipathchannel} are given by $\sqrt{\frac{\ell(d) N_t N_r}{\eta}}\kappa\gamma_i$. Thus, $||{\bf H}_\mathrm{NLOS}||^2_F = \kappa^2 \ell(d) N_t N_r \sum_{i=1}^{\eta}|\gamma_i|^2/\eta$, which is computed using the fact that square of Frobenius norm equals sum of squares of singular values of a matrix. Thus $\mathbb{E}\left[||{\bf H}_\mathrm{NLOS}||^2_F\right] =  N_t N_r \ell(d) \kappa^2$, which implies that the normalizing constant $\kappa = 1$. Similarly,  $\kappa = 1$ in \eqref{eq:LOSmultipathchannel}.

Since the AODs/AOAs are continuous random variables, any two such angles are unequal with probability 1. Using the orthogonality of the array response vectors for unequal AODs/AOAs, we get $P^{\mathrm{multi}}_r  \approx \left|\left|N_r N_t\gamma_1 \sqrt{\frac{\ell(d)}{\eta N_t N_r}} + 0\right|\right|^2  = N_t N_r \ell(d) \frac{|\gamma_1|^2}{\eta}$ with probability 1. Thus, the expectation of $P^{\mathrm{multi}}_r$ is $N_t N_r \ell(d) \frac{\mathbb{E}\left[|\gamma_1|^2\right]}{\eta}$. The result is approximate as we used asymptotic results in Observations 1 and 2 for finite  number of antennas. 

For LOS, since $\mathbb{E}\left[|\gamma_i|^2\right] = 1$, by Markov inequality $\mathbb{P}\left(|\gamma_i|^2 > \eta\mathrm{K}_\mathrm{R}\right)<1/\eta \mathrm{K}_\mathrm{R}$. Thus, owing to $\mathrm{K}_\mathrm{R}\gg 1$ with high probability the maximum singular value corresponds to the direct LOS path. This implies 
that ${\bf w} = \frac{1}{\sqrt{N_r}}{\bf a}_{r}(\phi_0)$ and ${\bf f} = \frac{1}{\sqrt{N_t}}{\bf a}_t(\theta_0)$,  which are singular vectors corresponding to the maximum singular value as per Observation 1. Thus, it is concluded that 
\ifSC
\begin{align*}
P^{\mathrm{multi}}_r \approx  ||{\bf a}^*_{r}(\phi_0) {\bf H}_{\mathrm{LOS}} {\bf a}_t(\theta_0)||^2  = \left|\left|\sqrt{\ell(d) N_t N_r\mathrm{K}_\mathrm{R}/(1+\mathrm{K}_\mathrm{R})} + \rho\right|\right|^2,
\end{align*}
\else
$
P^{\mathrm{multi}}_r \approx  ||{\bf a}^*_{r}(\phi_0) {\bf H}_{\mathrm{LOS}} {\bf a}_t(\theta_0)||^2  = \left|\left|\sqrt{\ell(d) N_t N_r\mathrm{K}_\mathrm{R}/(1+\mathrm{K}_\mathrm{R})} + \rho\right|\right|^2,
$
\fi
where $$\rho = \frac{1}{\sqrt{\eta(\mathrm{K}_\mathrm{R}+1)}}\sum_{i=1}^{\eta}\gamma_i {\bf a}^{*}_r(\phi_0){\bf a}_r(\phi_i){\bf a}^{*}_t(\theta_i){\bf a}_t(\theta_0).$$
Note that using Observation 2, we have $\rho\approx 0$ by similar arguments as for NLOS case considering the angles of arrival/departure are continuous random variables. Thus, $\mathbb{E}\left[P^{\mathrm{multi}}_r\right] \approx N_t N_r \ell(d)\frac{\mathrm{K}_\mathrm{R}}{\mathrm{K}_\mathrm{R}+1}$. 

\end{proof}

\begin{cor}
Large $N_t$ and $N_r$ is assumed and the link under consideration is assumed to be a service link. If $\gamma_i$ are complex normal random variables and independent of each other, $ \Upsilon \approx \frac{1}{\eta}\sum_{k=1}^{\eta}(1/k)$ if the link is NLOS.  If $\gamma_i$ are all identical to complex normal $\gamma_1$, $ \Upsilon \approx \frac{1}{\eta}$ for NLOS link. For LOS link and $\mathrm{K}_\mathrm{R}\gg 1$, $\Upsilon \approx \frac{\mathrm{K}_\mathrm{R}}{1+\mathrm{K}_\mathrm{R}} \approx 1$.
\end{cor}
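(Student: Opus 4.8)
The plan is to feed the two asymptotic expressions from Theorem~1 into the definition $\Upsilon = \expect{P^{\mathrm{multi}}_r}/\expect{P^{\mathrm{keyhole}}_r}$, using the remark (stated just after the first definition) that $\expect{P^{\mathrm{keyhole}}_r} = N_t N_r \ell(d)$ for any service link, LOS or NLOS. The LOS claim is then immediate: dividing $\expect{P^{\mathrm{multi}}_r} \approx N_t N_r \ell(d)\,\mathrm{K}_\mathrm{R}/(\mathrm{K}_\mathrm{R}+1)$ by $N_t N_r \ell(d)$ gives $\Upsilon \approx \mathrm{K}_\mathrm{R}/(\mathrm{K}_\mathrm{R}+1)$, and $\mathrm{K}_\mathrm{R}/(\mathrm{K}_\mathrm{R}+1)\to 1$ as $\mathrm{K}_\mathrm{R}\to\infty$. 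For NLOS the same substitution collapses the problem to evaluating $\expect{\max_{i=1,\ldots,\eta}|\gamma_i|^2}$, since $\Upsilon \approx \expect{\max_i |\gamma_i|^2}/\eta$.

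For the first NLOS case I would use that if the $\gamma_i$ are i.i.d.\ complex normal with $\expect{|\gamma_i|^2}=1$, then the $|\gamma_i|^2$ are i.i.d.\ unit-mean exponentials. The cleanest route to the expected maximum is the order-statistic (R\'enyi) representation: writing the ordered magnitudes as $|\gamma_{(1)}|^2 \le \cdots \le |\gamma_{(\eta)}|^2$, the gaps $|\gamma_{(1)}|^2$, $|\gamma_{(2)}|^2-|\gamma_{(1)}|^2,\ldots,|\gamma_{(\eta)}|^2-|\gamma_{(\eta-1)}|^2$ are independent exponentials with rates $\eta,\eta-1,\ldots,1$, whence $\expect{\max_i |\gamma_i|^2}=\expect{|\gamma_{(\eta)}|^2}=\sum_{k=1}^{\eta}1/k$. (Equivalently one integrates the survival function $1-(1-e^{-t})^{\eta}$ over $t\in(0,\infty)$ and recognizes the resulting alternating binomial sum $\sum_{k=1}^{\eta}\binom{\eta}{k}\frac{(-1)^{k+1}}{k}$ as the harmonic number $H_\eta=\sum_{k=1}^{\eta}1/k$.) Dividing by $\eta$ yields $\Upsilon \approx \frac{1}{\eta}\sum_{k=1}^{\eta}(1/k)$.

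For the second NLOS case, all $\gamma_i$ coincide with $\gamma_1$, so $\max_i|\gamma_i|^2=|\gamma_1|^2$ surely, and the normalization $\expect{|\gamma_1|^2}=1$ forces $\expect{\max_i|\gamma_i|^2}=1$ and hence $\Upsilon\approx 1/\eta$. I would add the sanity check $\frac{1}{\eta}\le\frac{1}{\eta}\sum_{k=1}^{\eta}(1/k)\le 1$, so the fully correlated case is the worst and the independent case the best among the two, with both strictly below the keyhole estimate — consistent with the letter's main message.

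I do not expect a genuine obstacle here; the only steps requiring care are (i) justifying $|\gamma_i|^2\sim\mathrm{Exp}(1)$ from the complex-normal assumption and (ii) the fact that the expected maximum of $\eta$ i.i.d.\ unit-mean exponentials equals the harmonic number $H_\eta$ — which is precisely why I would present the R\'enyi-spacing derivation, as it produces $H_\eta$ directly and sidesteps any combinatorial identity. Finally I would remark that the ``$\approx$'' in all three statements is inherited verbatim from Theorem~1 (the large-$N_t,N_r$ approximation), so the corollary introduces no additional approximation of its own.
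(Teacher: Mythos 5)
Your argument is correct and follows essentially the same route as the paper: plug Theorem~1 into $\Upsilon = \mathbb{E}\left[P^{\mathrm{multi}}_r\right]/\mathbb{E}\left[P^{\mathrm{keyhole}}_r\right]$ with $\mathbb{E}\left[P^{\mathrm{keyhole}}_r\right] = N_t N_r \ell(d)$, note that $|\gamma_i|^2$ are i.i.d.\ unit-mean exponentials, and use $\mathbb{E}\left[\max_i |\gamma_i|^2\right] = \sum_{k=1}^{\eta}1/k$, handling the identical-$\gamma_i$ and LOS cases by the same direct substitutions. The only difference is cosmetic: the paper cites a reference for the expected maximum of exponentials, whereas you derive it via the R\'enyi-spacing (or survival-function) argument, which is a fine self-contained alternative.
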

\begin{proof}
If $\gamma_i$ are complex normal random variables, $|\gamma_i|^2$ are exponentially distributed with unit mean. Also these are independent random variables. Thus,  $\mathbb{E}\left[\max_{i=1,\ldots, \eta} |\gamma_i|^2\right] = \sum_{k=1}^{\eta}(1/k)$\cite{Lugo11}. By Theorem 1 and $\mathbb{E}\left[P^{\mathrm{keyhole}}_r\right] = N_t N_r \ell(d)$, $\Upsilon \approx \mathbb{E}\left[\max_i |\gamma_i|^2\right]/\eta = \frac{1}{\eta}\sum_{k=1}^{\eta}(1/k)$ if $\gamma_i$ are complex normal random variables and independent of each other. Similarly, the other two results are derived.  
\end{proof}

From Corollary 1, NLOS received signal power can be significantly overestimated with the keyhole model for $\eta = 10$, which translates to $\Upsilon = -4.6$dB if $\gamma_i$ are identically equal to $\gamma$, and to $\Upsilon = -10$dB if $\gamma_i$ are independently but identically distributed. Note that this is an analytical result and that well accepted wideband models (like in \cite{3gpp_tr38901}) will have unequal distribution of powers amongst paths within and across different clusters. Estimating $\Upsilon$ in these settings is an avenue for further research. 

\begin{thm}
Let the transmitter and receiver beamforming vectors be $\frac{{\bf a}_t(\theta')}{\sqrt{N_t}}$ and $\frac{{\bf a}_r(\phi')}{\sqrt{N_r}}$.
If $\gamma_i$ are independent zero mean random variables with unit variance,  $\{\theta_i\}$ are identically distributed to $\theta$,  $\{\phi_i\}$ are identically distributed to $\phi$ and $\{\gamma_i\}$ are independent of all AOAs/AODs, then $\mathbb{E}\left[P^{\mathrm{multi}}_r\right] = \mathbb{E}\left[P^{\mathrm{keyhole}}_r\right]$ for NLOS  interfering links.
\end{thm}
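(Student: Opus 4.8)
The plan is to compute both expectations in closed form and observe that they coincide \emph{exactly}, without needing $N_t,N_r$ large (in contrast to Theorem 1). First I would recall from the proof of Theorem 1 that the normalizing constant is $\kappa=1$, so on the NLOS interfering link ${\bf H}_\mathrm{NLOS}=\sqrt{\ell(d)/\eta}\,\sum_{i=1}^{\eta}\gamma_i{\bf a}_r(\phi_i){\bf a}^*_t(\theta_i)$. Substituting the prescribed beamformers ${\bf w}={\bf a}_r(\phi')/\sqrt{N_r}$ and ${\bf f}={\bf a}_t(\theta')/\sqrt{N_t}$ and abbreviating the scalar inner products by $c_i=\big({\bf a}^*_r(\phi'){\bf a}_r(\phi_i)\big)\big({\bf a}^*_t(\theta_i){\bf a}_t(\theta')\big)$, a direct expansion gives $P^{\mathrm{multi}}_r=\frac{\ell(d)}{\eta N_r N_t}\left|\sum_{i=1}^{\eta}\gamma_i c_i\right|^2$.

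Next I would write the squared modulus as the double sum $\sum_{i,j}\gamma_i\bar\gamma_j c_i\bar c_j$ and take expectations. Because $\{\gamma_i\}$ is independent of all AOAs/AODs, each term factors as $\mathbb{E}[\gamma_i\bar\gamma_j]\,\mathbb{E}[c_i\bar c_j]$; and since the $\gamma_i$ are independent, zero mean and unit variance, $\mathbb{E}[\gamma_i\bar\gamma_j]=\mathds{1}(i=j)$, so every cross term vanishes and $\mathbb{E}\left[P^{\mathrm{multi}}_r\right]=\frac{\ell(d)}{\eta N_r N_t}\sum_{i=1}^{\eta}\mathbb{E}\left[|c_i|^2\right]$. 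Using the Section II notation, $|c_i|^2=|{\bf a}^*_r(\phi'){\bf a}_r(\phi_i)|^2\,|{\bf a}^*_t(\theta_i){\bf a}_t(\theta')|^2=N_r N_t\,G_r(\phi_i,\phi')G_t(\theta',\theta_i)$, so that $\mathbb{E}\left[P^{\mathrm{multi}}_r\right]=\frac{\ell(d)}{\eta}\sum_{i=1}^{\eta}\mathbb{E}\left[G_r(\phi_i,\phi')G_t(\theta',\theta_i)\right]$. Invoking the hypothesis that $\phi_i$ and $\theta_i$ are distributed as $\phi$ and $\theta$ (jointly with the common beam directions $\phi',\theta'$ that feed both models), every summand equals $\mathbb{E}\left[G_r(\phi,\phi')G_t(\theta',\theta)\right]$, yielding $\mathbb{E}\left[P^{\mathrm{multi}}_r\right]=\ell(d)\,\mathbb{E}\left[G_r(\phi,\phi')G_t(\theta',\theta)\right]$.

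Finally, for the keyhole interfering link the system model already gives $P^{\mathrm{keyhole}}_r=\ell(d)\,|\gamma|^2\,G_r(\phi,\phi')G_t(\theta',\theta)$; since $\gamma$ is independent of the angles with $\mathbb{E}[|\gamma|^2]=1$, we get $\mathbb{E}\left[P^{\mathrm{keyhole}}_r\right]=\ell(d)\,\mathbb{E}\left[G_r(\phi,\phi')G_t(\theta',\theta)\right]$, which is exactly the expression obtained above, so $\mathbb{E}\left[P^{\mathrm{multi}}_r\right]=\mathbb{E}\left[P^{\mathrm{keyhole}}_r\right]$.

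The only delicate point I anticipate is the distributional matching used at the end of the second paragraph: one must ensure that the joint law of the quadruple (channel AOA, channel AOD, beam AOA, beam AOD) is the same under the keyhole model and for each path index $i$ of the multipath model — in particular that ``$\{\theta_i\}$ identically distributed to $\theta$'' and ``$\{\phi_i\}$ identically distributed to $\phi$'' are meant jointly with $\{\phi_i\}$ versus $\phi$ and with $\phi',\theta'$, not merely marginally. I would state this explicitly among the hypotheses. Everything else is a routine quadratic-form expansion; it is worth remarking that, unlike Theorem 1, no asymptotics in $N_t,N_r$ and no orthogonality of array responses are invoked — the identity is exact for every $N_t,N_r$.
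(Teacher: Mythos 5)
Your proposal is correct and follows essentially the same route as the paper's own proof: expand the norm squared with the given beamformers, kill the cross terms in expectation using the independent zero-mean $\gamma_i$, factor the diagonal terms via independence of $\{\gamma_i\}$ from the angles, and identify each term with $\ell(d)\,\mathbb{E}\left[G_r(\phi',\phi_1)G_t(\theta_1,\theta')\right]=\mathbb{E}\left[P^{\mathrm{keyhole}}_r\right]$. Your added remarks (that the identity is exact for all $N_t,N_r$, and that the identical-distribution hypothesis should be read jointly with the beam directions so the keyhole and per-path angle laws truly match) are accurate clarifications of assumptions the paper leaves implicit, not a different argument.
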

\begin{proof}
The received signal power considering a multipath channel in \eqref{eq:multipathchannel} is given by
\begin{equation}
\label{eq:intermulti}
P^{\mathrm{multi}}_r = \frac{\ell(d)}{\eta} \left|\left|\sum_{i=1}^{\eta}\frac{\gamma_i}{\sqrt{N_t N_r}}{\bf a}^*_r(\phi'){\bf a}_r(\phi_i){\bf a}^*_t(\theta_i){\bf a}_t(\theta')\right|\right|^2.
\end{equation} 

Using independence of $\gamma_i$ and that these are zero mean random variables, the cross terms while expanding the norm squared in \eqref{eq:intermulti} become zero and thus, 
\ifSC
$\mathbb{E}\left[P^{\mathrm{multi}}_r\right]$ is equal to
\begin{equation*}
\frac{\ell(d)}{\eta}\sum_{i=1}^{\eta}\mathbb{E}\left[|\gamma_i|^2\right]\mathbb{E}\left[G_r(\phi',\phi_i)G_t(\theta_i,\theta')\right] = \frac{\ell(d)\eta \mathbb{E}\left[G_r(\phi',\phi_1)G_t(\theta_1,\theta')\right]}{\eta}
 = \mathbb{E}\left[P^{\mathrm{keyhole}}_r\right].
\end{equation*}
\else
\begin{align*}
&\mathbb{E}\left[P^{\mathrm{multi}}_r\right]  = \frac{\ell(d)}{\eta}\sum_{i=1}^{\eta}\mathbb{E}\left[|\gamma_i|^2\right]\mathbb{E}\left[G_r(\phi',\phi_i)G_t(\theta_i,\theta')\right]\\
& = \frac{\ell(d)\eta \mathbb{E}\left[G_r(\phi',\phi_1)G_t(\theta_1,\theta')\right]}{\eta}
 = \mathbb{E}\left[P^{\mathrm{keyhole}}_r\right].
\end{align*}
\fi
\end{proof}
Theorem 2 indicates that a correction factor is not necessary for NLOS interfering links, if the assumptions in the theorem hold true. Similar result can be stated for LOS interfering links. However, depending on the structure of the arrays, the per-element antenna gains and joint distribution of $\{\gamma_i,\phi_i,\theta_i\}$ a non-unity correction factor may be necessary for interfering links. Next, we will validate the need for a correction factor with some simulations using the 3GPP NR channel model\cite{3gpp_tr38901}. 

\section{Simulation Result with 3GPP model}
We consider two MIMO systems with link lengths $100$ meters operating at 73 GHz carrier frequency. One is LOS and the other is NLOS. $8\times 8$ uniform planar array with half wavelength spacing is assumed at the transmitters and the receivers. Note that considering a $8\times 8$ antenna array system is realistic for mmWave backhaul networks wherein both ends of a communication link are base stations (BSs)\cite{Nokia16}. Effective antenna gain for each of these MIMO systems is computed as $P^\mathrm{multi}_r/\ell(d)$ as per Definition 2. Here, $P^\mathrm{multi}_r$ was computed considering the 3GPP NR channel model \cite{3gpp_tr38901} along with optimal precoders and combiners that maximize the SNR and a unit transmit power. Several realizations of the 3GPP channel were simulated for both the links. The distribution of effective antenna gain seen by the LOS and NLOS link is plotted in Fig.~\ref{fig:3}. As seen from Fig.~\ref{fig:3} there is a drop of about 12 dB in NLOS median gain compared to LOS, which is very significant. The implication of such  drop in effective antenna gain is discussed in next section. The LOS effective antenna gain in Fig. 1 is very close to $10\log_{10}(64\times 64) = 36$dB , as expected, since correction factor for LOS links is negligible as per our analysis.  Surprisingly the drop in NLOS gain is equal to $-10\log_{10}19$, wherein $19$ is the mean number of NLOS clusters in the 3GPP model. This equals our analytical estimate of $1/\eta$ considering $\eta = 19$. A more accurate analysis explicitly modeling different clusters with multiple rays and correlated small scale fading is a possible future work.

\ifSC
\begin{figure}
\centering
\includegraphics[width = \columnwidth]{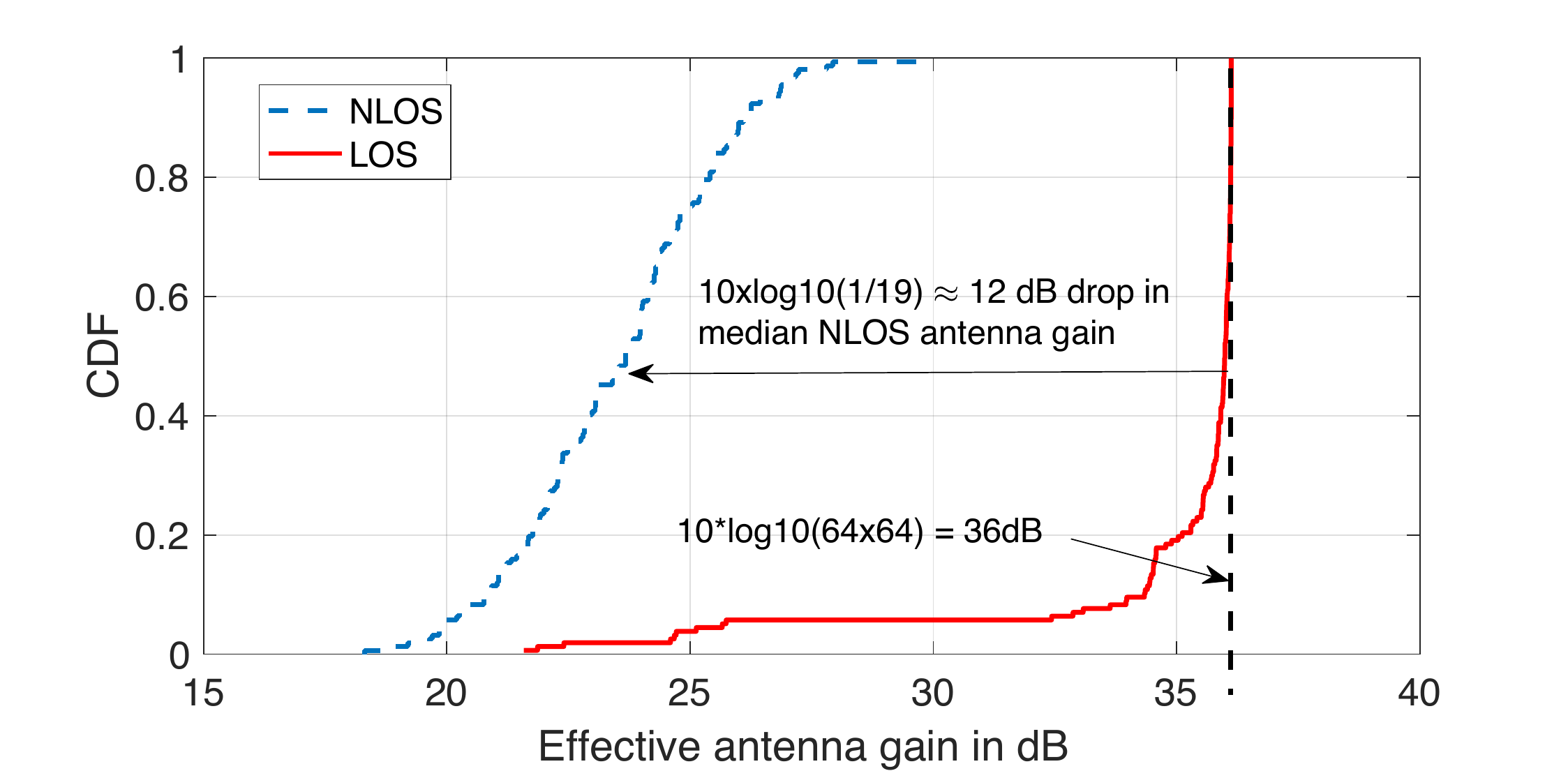}
\caption{Comparison of effective antenna gain for LOS and NLOS links with new radio 3GPP channel model\cite{3gpp_tr38901}.}
\label{fig:3}
\end{figure}
\else
\begin{figure}
\centering
\includegraphics[width = \columnwidth]{sBH_GoBgains_LoS_vs_NLoS_links}
\caption{Comparison of effective antenna gain for LOS and NLOS links with new radio 3GPP channel model\cite{3gpp_tr38901}.}
\label{fig:3}
\end{figure}
\fi
\section{Implications and Applicability of the Work}
This work is applicable for MIMO wireless networks with highly directional single stream beamforming at the transmitter and the receiver. The analysis can also be extended for multi-user MIMO with large number of transmit and receive antennas. In short, whenever the underlying signal processing of a large MIMO system is abstracted to compute the received signal power as a product of a single input single output (SISO) received signal power and some spatial antenna gain patterns at the transmitter/receiver for simplified analysis, there will be a need for incorporating the correction factor to make sure that identical antenna gain patterns are not multiplied for LOS and NLOS links, as well as serving and interfering links. The implications of the work are prominent in the following scenarios. For dense outdoor-to-outdoor cellular networks, a user would likely associate with a LOS BS and thus the signal to noise ratio (SNR) coverage estimates wouldn't vary significantly, except for the tail probability when a user associates with a NLOS BS that affects the cell edge rates. Otherwise, we expect such a correction factor to be significant since there is significant probability of connecting to a NLOS BS since the SNR distribution itself will shift by $\Upsilon$. We expect the significance of such a correction factor to also be significant in analysis of multi-hop mmWave cellular networks wherein the fiber site deployment will be relatively sparse and thus there will be a question as to whether a relay should go for a NLOS direct hop to fiber base station or whether it should relay over multiple LOS hops. Given that the correction factor introduced in this letter doesn't affect LOS links but strongly affects NLOS links, LOS hops will be even more strongly favoured over NLOS hops. Neglecting the correction factor but using a model like \eqref{eq:singlepath} can lead to misleading insights. 
\renewcommand{\appendixname}{Appendix: An Example Demonstrating The Use Of Correction Factor}
\begin{appendix}
Consider a receiving user at origin and a collection of transmitting base stations (BSs) in $\mathbb{R}^2$ including a BS at $Y\in\mathbb{R}^2$ which is NLOS with respect to the user. We want to understand the SINR performance of that user using the simplified received signal power in \eqref{eq:singlepath} that models beamforming through a spatial antenna pattern. Our proposal is to introduce the correction factor to compute the received signal powers. In principle, this factor is different for LOS and NLOS as well as for service and interfering links. For simplicity of exposition, we will consider the correction factor to be much less than 1 for NLOS serving links and equal to 1 for rest of the cases, which is an outcome of our asymptotic analysis. First to evaluate whether $Y$ is an interferer or a serving BS -- usually the serving BS is the one with maximum received signal power averaged over $h$ -- one has to multiply a correction factor that is much less than 1 to the received signal power from $Y$ to origin. However, if it is determined that the BS does not serve the user but is a potentially interfering BS, then the correction factor is equal to 1 while computing the interference power from the same BS at $Y$ to the receiver at origin. Such a modification in received power, which is done differently for service and interfering links cannot be done by modifying $\ell(.)$ or $h$.
\end{appendix}
\bibliographystyle{IEEEtran}
\bibliography{IEEEabrv,Kulkarni}

\begin{thebibliography}{10}
\providecommand{\url}[1]{#1}
\csname url@samestyle\endcsname
\providecommand{\newblock}{\relax}
\providecommand{\bibinfo}[2]{#2}
\providecommand{\BIBentrySTDinterwordspacing}{\spaceskip=0pt\relax}
\providecommand{\BIBentryALTinterwordstretchfactor}{4}
\providecommand{\BIBentryALTinterwordspacing}{\spaceskip=\fontdimen2\font plus
\BIBentryALTinterwordstretchfactor\fontdimen3\font minus
  \fontdimen4\font\relax}
\providecommand{\BIBforeignlanguage}[2]{{%
\expandafter\ifx\csname l@#1\endcsname\relax
\typeout{** WARNING: IEEEtran.bst: No hyphenation pattern has been}%
\typeout{** loaded for the language `#1'. Using the pattern for}%
\typeout{** the default language instead.}%
\else
\language=\csname l@#1\endcsname
\fi
#2}}
\providecommand{\BIBdecl}{\relax}
\BIBdecl

\bibitem{Yu03}
J.~Yu, Y.~D. Yao, J.~Zhang, and A.~F. Molisch, ``Reverse link capacity of
  power-controlled {CDMA} systems with antenna arrays in a multipath fading
  environment,'' in \emph{Proc. Globecom}, vol.~2, Dec 2003, pp. 839--843.

\bibitem{Hun08}
A.~M. Hunter, J.~G. Andrews, and S.~Weber, ``Transmission capacity of ad hoc
  networks with spatial diversity,'' \emph{{IEEE} Trans. Wireless Commun.},
  vol.~7, no.~12, pp. 5058--5071, Dec. 2008.

\bibitem{BaiHea14}
T.~Bai and {R. W. Heath}{ Jr.}, ``Coverage and rate analysis for millimeter
  wave cellular networks,'' \emph{{IEEE} Trans. Wireless Commun.}, vol.~14,
  no.~2, pp. 1100--1114, Oct. 2014.

\bibitem{Wildman14}
J.~Wildman, P.~H.~J. Nardelli, M.~Latva-aho, and S.~Weber, ``On the joint
  impact of beamwidth and orientation error on throughput in directional
  wireless poisson networks,'' \emph{{IEEE} Trans. Wireless Commun.}, vol.~13,
  no.~12, pp. 7072--7085, Dec 2014.

\bibitem{Li17}
Y.~Li, J.~G. Andrews, F.~Baccelli, T.~D. Novlan, and C.~J. Zhang, ``Design and
  analysis of initial access in millimeter wave cellular networks,''
  \emph{{IEEE} Trans. Wireless Commun.}, vol.~16, no.~10, pp. 6409--6425, Oct
  2017.

\bibitem{Maamari16}
D.~Maamari, N.~Devroye, and D.~Tuninetti, ``Coverage in mm{W}ave cellular
  networks with base station co-operation,'' \emph{{IEEE} Trans. Wireless
  Commun.}, vol.~15, no.~4, pp. 2981--2994, April 2016.

\bibitem{Park16}
J.~Park, S.~L. Kim, and J.~Zander, ``Tractable resource management with uplink
  decoupled millimeter-wave overlay in ultra-dense cellular networks,''
  \emph{{IEEE} Trans. Wireless Commun.}, vol.~15, no.~6, pp. 4362--4379, June
  2016.

\bibitem{Deng17}
N.~Deng and M.~Haenggi, ``A fine-grained analysis of millimeter-wave
  device-to-device networks,'' \emph{{IEEE} Trans. Commun.}, vol.~65, no.~11,
  pp. 4940--4954, Nov 2017.

\bibitem{Kulkarni17}
M.~N. Kulkarni, J.~G. Andrews, and A.~Ghosh, ``Performance of dynamic and
  static {TDD} in self-backhauled millimeter wave cellular networks,''
  \emph{{IEEE} Trans. Wireless Commun.}, vol.~16, no.~10, pp. 6460--6478, Oct
  2017.

\bibitem{KulGhoAnd16}
M.~N. Kulkarni, A.~Ghosh, and J.~G. Andrews, ``A comparison of {MIMO}
  techniques in downlink millimeter wave cellular networks with hybrid
  beamforming,'' \emph{{IEEE} Trans. Commun.}, vol.~64, no.~5, pp. 1952--1967,
  May 2016.

\bibitem{Akd14}
M.~R. Akdeniz \emph{et~al.}, ``Millimeter wave channel modeling and cellular
  capacity evaluation,'' \emph{{IEEE} J. Sel. Areas Commun.}, vol.~32, no.~6,
  pp. 1164--1179, June 2014.

\bibitem{3gpp_tr38901}
3GPP, ``Study on channel model for frequencies from 0.5 to 100 {GH}z,'' Tech.
  Rep. version 14.2.0., Sept 2017.

\bibitem{Chizhik00}
D.~Chizhik, G.~J. Foschini, and R.~A. Valenzuela, ``Capacities of multi-element
  transmit and receive antennas: {C}orrelations and keyholes,''
  \emph{Electronics Letters}, vol.~36, no.~13, pp. 1099--1100, Jun 2000.

\bibitem{Aya12}
O.~E. Ayach, R.~W. Heath, S.~Abu-Surra, S.~Rajagopal, and Z.~Pi, ``The capacity
  optimality of beam steering in large millimeter wave {MIMO} systems,''
  \emph{Proc. SPACWC}, pp. 100--104, 2012.

\bibitem{Lugo11}
M.~Lugo, ``The expectation of the maximum of exponentials,'' \emph{Stat 134
  Class Notes, UC Berkeley}, Oct. 2011, available at
  https://www.stat.berkeley.edu/~mlugo/stat134-f11/exponential-maximum.pdf.

\bibitem{Nokia16}
``The 5{G} mm{W}ave revolution,'' Nokia, Tech. Rep., 2016, available at
  https://goo.gl/6XBxks.

\end{thebibliography}
\end{document}